\newtheorem{thm}{Theorem}
\newtheorem{lem}[thm]{Lemma}
\newtheorem{fact}[thm]{Fact}
\newcommand{\Xomit}[1]{ }
\newcommand{\alg}{CACO }
\newcommand{\algo}{CACO2 }
\begin{document}

\title{Deterministic Online Call Control in Cellular Networks and Triangle-Free
Cellular Networks}

\author{
Joseph Wun-Tat Chan\thanks{College of International Education,
Hong Kong Baptist University, Hong Kong, cswtchan@gmail.com}
\and
Francis Y.L. Chin\thanks{Department of Computer Science, The University of Hong Kong, Hong Kong, chin@cs.hku.hk, Research supported by HK RGC grant HKU-7117/09E
and the William M.W. Mong Engineering Research Fund}
\and
Xin Han\thanks{School of Software, Dalian University of Technology, China, hanxin.mail@gmail.com. Partially supported by Start-up Funding (1600-893335) provided by DUT, China}
\and Ka-Cheong Lam\thanks{College of Computer Science, Zhejiang University, China,
pandaman@163.com}
\and
Hing-Fung Ting\thanks{Department of Computer Science, The University of Hong Kong, Hong Kong, hfting@cs.hku.hk,
Research supported by HK RGC grant HKU-7171/08E}
\and
Yong Zhang\thanks{Department of Computer Science, The University of Hong Kong, Hong Kong, yzhang@cs.hku.hk}}

\date{}

\maketitle
\begin{abstract}
Wireless Communication Networks based on Frequency
Division Multiplexing (FDM in short) plays an important
role in the field of communications, in which each request
can be satisfied by assigning a frequency. To avoid
interference, each assigned frequency must be different to
the neighboring assigned frequencies. Since frequency is
a scarce resource, the main problem in wireless networks
is how to fully utilize the given bandwidth of frequencies. In
this paper, we consider the online call control problem.
Given a fixed bandwidth of frequencies and a sequence of
communication requests arrive over time, each request must be
either satisfied immediately after its arrival by assigning
an available frequency, or rejected. The objective
of call control problem is to maximize the number of accepted
requests. We study the asymptotic performance of this problem,
i.e., the number of requests in the sequence and the bandwidth
of frequencies are very large.
In this paper, we give a 7/3-competitive algorithm for call
control problem in cellular network, improving the previous
2.5-competitive result. Moreover, we investigate the triangle-free
cellular network, propose a 9/4-competitive algorithm and prove
that the lower bound of competitive ratio is at least 5/3.
\end{abstract}

\textbf{Keywords:} Online algorithms, Call control problem,
Cellular networks, Triangle-free cellular network

\section{Introduction}

Frequency Division Multiplexing (FDM in short) is commonly
used in wireless communications. To implement FDM, the wireless
network is partitioned into small regions (cell) and each cell
is equipped with a base station. When a call request arrives
at a cell, the base station in this cell will assign a frequency
to this request, and the call is established via this frequency.
Since frequency is a scarce resource, to satisfy the requests
from many users, a straightforward idea is
reusing the same frequency for different
call requests. But if two calls which are close to each other
are using the same
frequency, interference will happen to violate the quality of
communications. Thus, to avoid interference, the same frequency
cannot be assigned to two different calls with distance close
to each other. In general, the same frequency cannot be assigned
to two calls in the same cell or neighboring cells.

There are two research directions on the fully utilization of the
frequencies. One is \textit{frequency assignment problem}, and the other
is \textit{call control problem}. In frequency assignment problem,
each call request must be accepted, and the objective is to minimize
the number of frequencies to satisfy all requests. In call control
problem, the bandwidth of frequency is fixed, thus, when the number
of call requests in a cell or in some neighboring cells is larger
than the total bandwidth, the request sequence cannot be totally
accepted, i.e., some requests would be rejected. The objective of
call control problem is to accept the requests as many as possible.\\

\noindent\textbf{Problem Statement:}

In this paper, we consider the online version of call control
problem.
There are $\omega$ frequencies available in the wireless networks.
A sequence $\sigma$ of call requests arrives over time,
where $\sigma=\{r_1, r_2, ..., r_t, ...\}$,
$r_t$ denotes the $t$-th call request and also represent the
cell where the $t$-th request arrives.
When a request arrives at a cell, the system must either choose
a frequency to satisfy this request without interference with
other assigned frequencies in this cell and its neighboring cells,
or reject this request.
When handling a request, the system does not know any
information about future call requests. We assume that
when a frequency is assigned to a call, this call will
never terminate and the frequency cannot be changed.
The objective of this problem is to maximize the number of
accepted requests.

We focus on the call control problem in cellular networks and
triangle-free cellular networks. In the cellular networks,
each cell is a hexagonal region and has six neighbors, as shown
in Figure~\ref{fig:cellular}. The cellular network is widely
used in wireless communication networks.
A network is \textit{triangle-free} if
there are no 3-cliques in the network, i.e., there are no three
mutually-adjacent cells. An example of a
triangle-free cellular network is shown in
Fig.~\ref{fig:triangle-free}.\\

\begin{figure}[htpb]
\centering \subfigure[cellular network]{ \label{fig:cellular}
\includegraphics[width=2in]{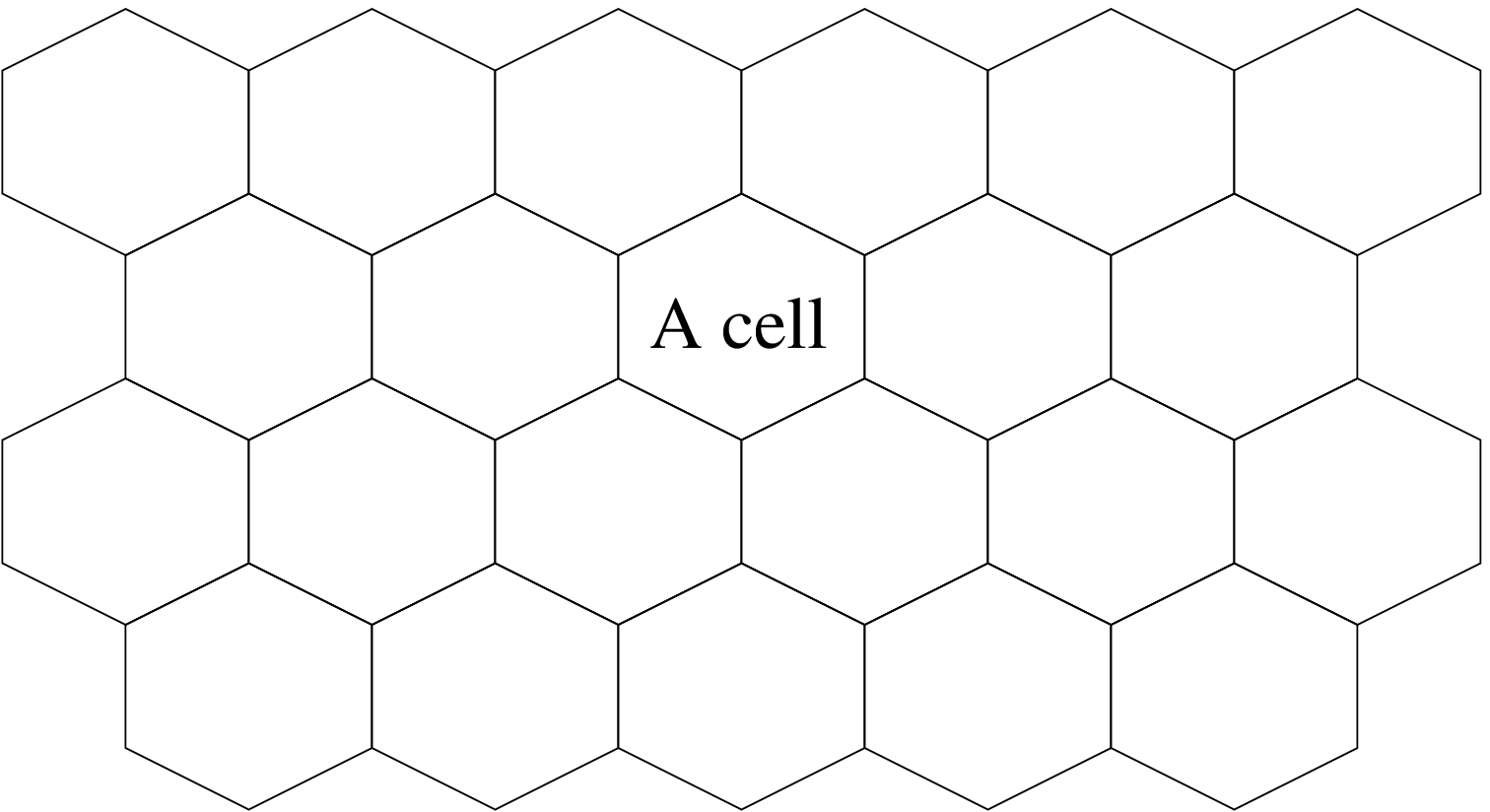}}
\hspace{0.5in} \subfigure[triangle-free cellular network]
{\label{fig:triangle-free}
\includegraphics[width=2in]{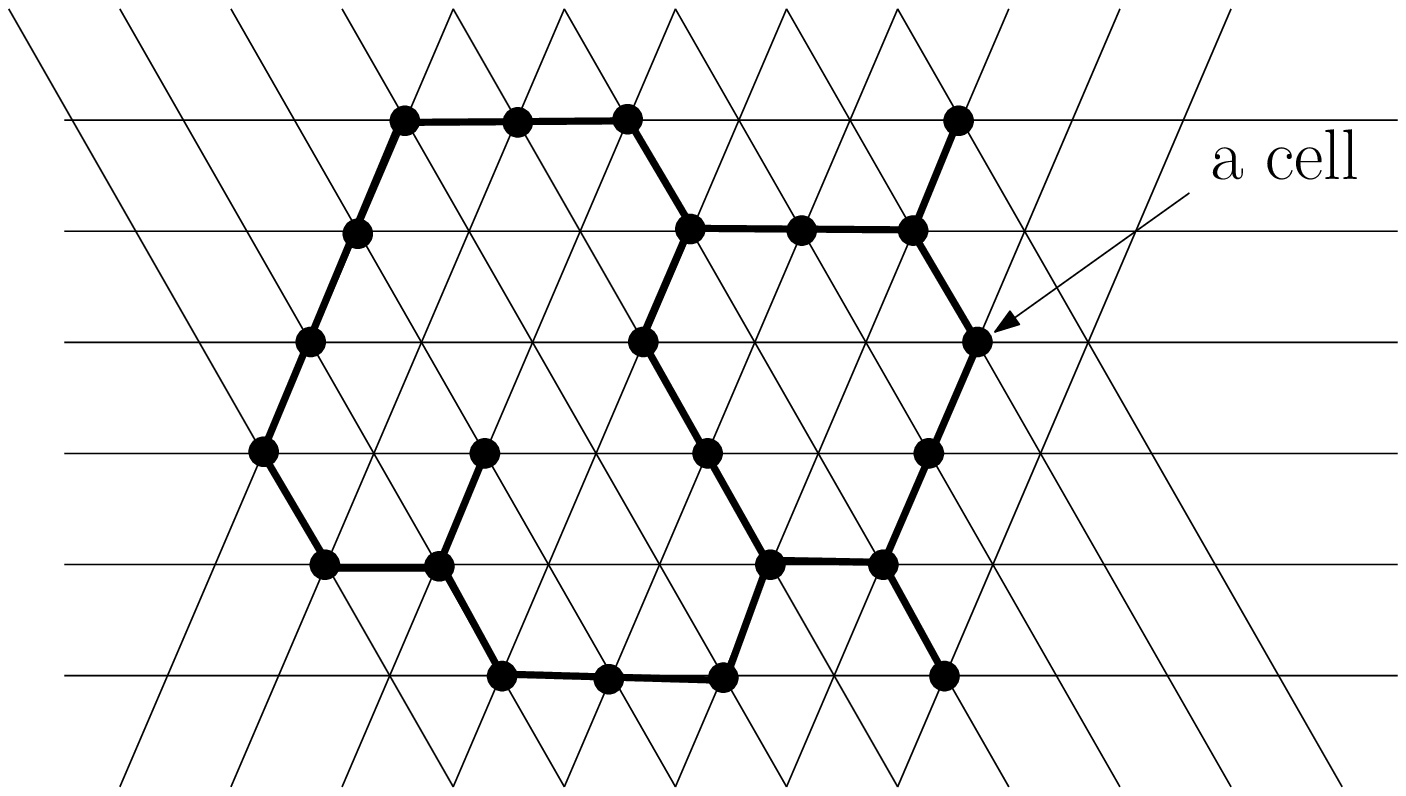}}
\caption{An example of the cellular network and triangle-free cellular network} \label{fig:network}
\end{figure}

\noindent\textbf{Performance Measure:}

To measure the performance of online algorithms, we
use the competitive ratio to compare the performance
between the online algorithm and the optimal offline algorithm,
which knows the whole request sequence in advance.
In call control problem, the output is the number of
accepted requests.
For a request sequence $\sigma$, let $A(\sigma)$ and $O(\sigma)$
denote the number of accepted request of an online algorithm $A$
and the optimal offline algorithm $O$, respectively.
We focus on the asymptotic performance for the call control problem,
i.e., the number of requests and the number of frequencies are
very large positive integers. The asymptotic competitive ratio
for an online algorithm $A$ is
$$R_A^\infty=\limsup_{n\rightarrow\infty}
\max_\sigma\{\frac{O(\sigma)}{A(\sigma)}| O(\sigma)=n\}.$$

\noindent\textbf{Related Works:}

How to fully utilize the frequencies to satisfy the communication
requests is a very fundamental problem in theoretical computer
science and wireless communications.
Both the frequency assignment problem
and the call control problem are well studied during these years.
From the description of these two problems, we know that
the call control problem is the dual problem of the frequency
assignment problem.

The offline version of the frequency assignment problem in
cellular networks was proved to be NP-hard by McDiarmid and
Reed~\cite{MR00}, and two 4/3-approximation algorithms were
given in~\cite{MR00,NS01}. In the online frequency assignment problem,
when a call request arrives, the network must immediately assign
a frequency to this call without any interference. There are mainly
three strategies: Fixed Allocation~\cite{Mac79}, Greedy Assignment~\cite{CKP02},
and Hybrid
Assignment~\cite{CCYZ07}. If the duration of each call is infinity and the
assigned frequency cannot be changed, the hybrid algorithm gave
the best result for online frequency assignment, i.e., a
2-competitive algorithm for the absolute performance and a
1.9126-competitive algorithm for the asymptotic performance.
When the background network is triangle-free, a 2-local
5/4-competitive algorithm was given in~\cite{SZ04}, an inductive
proof for the 7/6 ratio was reported in~\cite{Havet01},
where $k$-local means when assigning a frequency,
the base station only knows the information of its neighboring cells
within distance $k$.
In~\cite{ZCZ09}, a 1-local 4/3-competitive algorithm was given.

For the call control problem, the offline version is
NP-hard too~\cite{MR00}. To handle such problem, greedy strategy is always
the first try, when a call request arrives, the network choose
the minimal available frequency to serve this request, if any
frequency is interfere with some neighboring assigned frequency,
the request will be rejected.
Pantziou et al.~\cite{PPS02} analyzed the performance of the greedy strategy,
proved that the asymptotic competitive ratio of the greedy strategy is equal
to the maximal degree of the network.
Caragiannis et al.~\cite{CKP02} gave a randomized algorithm for the call
control problem in cellular networks, the asymptotic competitive ratio of
their algorithm is 2.651. Later, the performance of the randomized
algorithms was improved to 16/7 by the same authors~\cite{CKP08}, they also
proved the lower bound of the asymptotic competitive ratio for the randomized
algorithm is at least 2.
Very recently, a deterministic algorithm with asymptotic competitive ratio
2.5 was given in~\cite{YHZ09}, and the lower bound of the
asymptotic competitive ratio
for the deterministic algorithm was proved to be 2.\\

\noindent\textbf{Our Contributions:}

In this paper, we consider the deterministic algorithms for the
online call control problem in cellular networks and triangle-free
cellular networks. In cellular network, we give a 7/3-competitive
algorithm, improving the previous 2.5-competitive result.
In triangle-free network, we propose a 9/4-competitive algorithm,
moreover, we show that the lower bound of the competitive ratio
in triangle-free network is at least 5/3.

\section{Call Control in Cellular Networks}

The idea of our algorithm for call control problem in cellular
networks is similar to the algorithm in~\cite{YHZ09}. By using
a totally different analysis, the performance of our algorithm is
better. Moreover, our algorithm is best possible among algorithms
using this kind of idea.

Cellular networks are 3 colorable, each cell can be associated
with a color from $\{R, G, B\}$ and any two neighboring cells
are with different colors. Partition the
frequencies into four sets, $F_R$, $F_B$, $F_G$, and $F_S$, where
$F_X$ ($X\in\{R, G, B\}$) can be only used in cells with color $X$
and $F_S$ can be used in any cell.
Since we consider the asymptotic performance of the call control problem,
we may regard the number $\omega$ of frequencies in the system is
a multiple of 7. Divide the the frequencies into four disjoint sets as follows:

\begin{displaymath}
\begin{array}{ll}
F_R =&\{1,..., 2\omega/7 \},\\
F_G =&\{2\omega/7+1, ..., 4\omega/7\},\\
F_B =&\{4\omega/7+1, ..., 6\omega/7\}, and\\
F_S =&\{6\omega/7+1 , ..., \omega \}
\end{array}
\end{displaymath}

Obviously, the ratio between the number of frequencies in
$F_R$, $F_G$, $F_B$, and $F_S$ is $2:2:2:1$.\\

Now we describe our algorithm \alg as follows:

\begin{algorithm}
\caption{\textbf{CACO}: When a request arrives at a cell $C$ with color $c\in\{R, G, B\}$}
\begin{algorithmic}[1]
\If{$F_c$ is not totally used up}
\State assign the
minimal available frequency from $F_c$ to satisfy this request.
\ElsIf{$F_S$ is not totally used up in cell $C$ and
its neighboring cells}
    \State assign the minimal available frequency from
$F_S$ to satisfy this request.
\Else
\State reject this request.
\EndIf
\end{algorithmic}
\end{algorithm}


The high level idea to show the performance of our algorithm
\alg is to prove that the ratio between
the total number of accepted requests by \alg and the
total number of satisfied requests by the optimal offline
algorithm is at least 3/7.
To prove this, we analyze the number of satisfied requests in
each cell and its neighboring cells, then compare the number with
the optimum value.



Let $R_i$ be the number of the requests arrived in cell $C_i$.
Let $O_i$ be the number of requests accepted by the optimal offline
algorithm in cell $C_i$. $\sum_i O_i$ is the total number of
accepted request by the optimal offline algorithm.
Let $A_i$ be the number of requests accepted by our online algorithm
\alg in cell $C_i$. $\sum_i A_i$ is the total number of
accepted request by CACO.
Let $A_x(C_i)$ be the the number of requests accepted by \alg
in cell $C_i$ by assigning frequencies from frequency set $F_x$.
It can be seen that $A_i=A_R(C_i)+A_G(C_i)+A_B(C_i)+A_S(C_i)$.
If $C_i$ is colored with $x\in\{R, G, B\}$, then $A_i=A_x(C_i)+A_S(C_i)$.

\begin{fact}
For each cell $C_i$,
$O_i\le R_i$, $A_i\le R_i$, and $A_i\ge 2\omega/7$ when $R_i\ge 2\omega/7$.
\end{fact}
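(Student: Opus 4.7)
The plan is to handle the three claims separately, exploiting the proper $3$-coloring of the cellular network on which CACO is built. The first two inequalities are essentially definitional: $R_i$ counts every request that arrives at $C_i$, and both the optimum and the online algorithm can accept at most one frequency per arriving request. So each of $O_i$ and $A_i$ is at most $R_i$, and I would dispatch these bounds in one sentence.

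For the nontrivial claim $A_i \ge 2\omega/7$ when $R_i \ge 2\omega/7$, I would first invoke the color partition. Suppose $C_i$ has color $c \in \{R,G,B\}$. The algorithm uses $F_c$ only in cells of color $c$; since the 3-coloring is proper, no two cells of color $c$ are adjacent. Therefore an $F_c$-frequency $f$ assigned at some other cell $C_j$ of color $c$ is assigned at a non-neighbor of $C_i$ and does not block $f$ from being used at $C_i$. Moreover, $f$ is never assigned at any cell with a color different from $c$. Consequently, from $C_i$'s local perspective, all $|F_c| = 2\omega/7$ frequencies of $F_c$ remain available until they are themselves assigned inside $C_i$, i.e.\ the "availability" of $F_c$ at $C_i$ is a purely local counter starting at $2\omega/7$.

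With this in hand, I would simply inspect the algorithm. As long as some $F_c$-frequency is still unassigned at $C_i$, the first branch of CACO fires and the incoming request receives a fresh frequency from $F_c$. Hence the first $2\omega/7$ requests at $C_i$ are all accepted via $F_c$, yielding $A_c(C_i) \ge \min\{R_i,\, 2\omega/7\}$ and therefore $A_i \ge A_c(C_i) \ge 2\omega/7$ whenever $R_i \ge 2\omega/7$.

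The proof is essentially a bookkeeping argument, so there is no serious obstacle. The one subtlety worth stating explicitly is the claim that assignments of $F_c$ at other cells never affect $C_i$: it is exactly here that the non-adjacency of same-colored cells is needed, and I would flag this invocation of the 3-coloring property rather than leaving it implicit.
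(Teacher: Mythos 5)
Your proposal is correct: the paper states this Fact without any proof, treating it as immediate, and your argument supplies exactly the intended justification --- the bounds $O_i\le R_i$ and $A_i\le R_i$ are definitional, and since the proper $3$-coloring guarantees no two cells of color $c$ are adjacent, frequencies of $F_c$ are never blocked at $C_i$ by other cells, so the first branch of CACO accepts every arriving request until $F_c$ is locally exhausted, giving $A_i\ge\min\{R_i,\,2\omega/7\}$. Your explicit flagging of where properness of the coloring is used is a sensible addition to what the paper leaves implicit.
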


According to the number of satisfied requests by the optimal offline
algorithm, we classify the cells
into two types: cell $C_i$ is $safe$ if $O_i\le 2\omega/3$, and
$dangerous$ otherwise.

\begin{lem}\label{lem:safe}
Suppose cell $C_i$ is with color $x$, if $C_i$ is
safe, then $A_i\ge 3O_i/7$
\end{lem}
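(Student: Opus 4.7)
The plan is to do a short case split on the number $R_i$ of requests arriving at $C_i$, using only the given \textbf{Fact} and the definition of ``safe''. The point is that the algorithm handles the ``few requests'' regime trivially by accepting everything from $F_x$, and handles the ``many requests'' regime by filling $F_x$ completely, which already gives enough acceptances because the safety bound on $O_i$ keeps the target $3O_i/7$ small.

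First, I would consider the case $R_i \le 2\omega/7$. Since $C_i$ has color $x$ and the neighbors of $C_i$ all have colors different from $x$, frequencies in $F_x$ are never used in any neighbor of $C_i$, so \alg can always serve every request in $C_i$ from $F_x$. Thus $A_i = R_i$, and by the \textbf{Fact} we get $A_i = R_i \ge O_i \ge 3O_i/7$.

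Next, the case $R_i > 2\omega/7$. By the \textbf{Fact}, $A_i \ge 2\omega/7$. Since $C_i$ is safe, $O_i \le 2\omega/3$, so
\[
\tfrac{3}{7} O_i \;\le\; \tfrac{3}{7}\cdot \tfrac{2\omega}{3} \;=\; \tfrac{2\omega}{7} \;\le\; A_i,
\]
which finishes this case.

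I do not expect any real obstacle here: the entire argument is a two-line case split driven by the threshold $R_i = 2\omega/7$, matched against the safety threshold $O_i \le 2\omega/3$. The only thing one has to be a little careful about is justifying why \alg never fails to assign from $F_x$ as long as $R_i \le 2\omega/7$, and for that the 3-coloring property of the cellular network (that cells of color $x$ form an independent set) is exactly what is needed.
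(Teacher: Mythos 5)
Your proof is correct and follows essentially the same argument as the paper: the identical case split at $R_i = 2\omega/7$, using $A_i = R_i \ge O_i$ in the first case and $A_i \ge 2\omega/7 \ge 3O_i/7$ via the safety bound $O_i \le 2\omega/3$ in the second. Your added justification that $F_x$ frequencies cannot interfere across neighbors (since same-colored cells are non-adjacent) is a detail the paper leaves implicit, but the route is the same.
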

\begin{proof}
Consider the following two cases.
\begin{itemize}
\item $R_i\le 2\omega/7$

According to CACO, all requests in this cell must be satisfied when
$R_i\le 2\omega/7$, thus, $A_i=R_i$.
Since $R_i\ge O_i$, we have $A_i\ge 3O_i/7$.

\item $R_i>2\omega/7$

In this case, \alg will accept at least $2\omega/7$
requests by assigning frequencies from $F_x$, thus,
$A_i\ge 2\omega/7$. Since $C_i$ is safe,
$O_i\le 2\omega/3$, therefore, we have $A_i\ge 3O_i/7$.
\end{itemize}
Combining the above two cases, this lemma is true.
\end{proof}

\begin{fact}\label{fact:safe}
A safe cell has at most 3 dangerous neighboring cells.
All neighboring cells around a dangerous cell are safe.
\end{fact}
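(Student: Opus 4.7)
The plan is to exploit the single observation that any two neighboring cells must use disjoint frequency sets in any valid (in particular, the optimal) assignment, so for adjacent cells $C_i, C_j$ we have $O_i + O_j \le \omega$. Both parts of the fact then fall out of this inequality together with the geometry of the hexagonal grid.

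For the second statement, I would argue directly. Let $C_i$ be dangerous, so $O_i > 2\omega/3$. For any neighbor $C_j$ the adjacency inequality gives $O_j \le \omega - O_i < \omega/3 \le 2\omega/3$, so $C_j$ is safe.

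For the first statement, I would use the structure of a cell's neighborhood: in the hexagonal cellular network of Figure~\ref{fig:cellular}, the six neighbors $N_1, N_2, \ldots, N_6$ of a cell $C_i$, taken in cyclic order around $C_i$, induce a $6$-cycle, with $N_k$ adjacent to $N_{k-1}$ and $N_{k+1}$ (indices mod $6$). If two consecutive neighbors $N_k$ and $N_{k+1}$ in this ring were both dangerous, then $O_{N_k} + O_{N_{k+1}} > 4\omega/3 > \omega$, contradicting the adjacency inequality. Hence the set of dangerous neighbors is an independent set in the $6$-cycle, and since the independence number of $C_6$ is $3$, at most three of the six neighbors of $C_i$ can be dangerous.

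There is essentially no hard step here; the only thing to double-check is the geometric claim that the six neighbors of a hexagonal cell form a $C_6$ (equivalently, that any two neighbors of $C_i$ that are adjacent to each other must be cyclically consecutive around $C_i$). Once that is granted, both parts are immediate from the frequency-disjointness of adjacent cells combined with the threshold $2\omega/3$ defining ``dangerous.''
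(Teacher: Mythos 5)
Your proposal is correct and takes essentially the same route as the paper: both reduce everything to the observation that adjacent cells satisfy $O_i+O_j\le\omega$, so two dangerous cells (each with $O>2\omega/3$) can never be neighbors, which immediately gives the second statement and, via pigeonhole on the six neighbors, the first. Your only refinement is making the geometric step explicit --- that the six neighbors induce a $6$-cycle whose independence number is $3$ --- where the paper simply asserts that among more than three dangerous neighbors two must be adjacent.
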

\begin{proof}
This fact can be proved by contradiction. If a safe cell
$C$ has more than 3 dangerous neighboring cells,
since $C$ has 6 neighboring cells, there must exist two
dangerous cells which are neighbors. From the definition of
dangerous cell, the total number of accepted request in
these two dangerous neighboring cells is strictly more than
$\omega$, contradiction!

Similarly, if a dangerous cell $C'$ is a neighboring cell
of another dangerous cell $C$, the total number of accepted
request in $C$ and $C'$ is strictly more than $\omega$.
Contradiction!
\end{proof}

According to the algorithm CACO, when a request cannot be satisfied in
a cell $C$ with color $c$, all frequencies in $F_c$ must be used in
$C$, and all frequencies in $F_S$ must be used in $C$ and its
six neighbors. Thus, we have the following fact:
\begin{fact}
If cell $C$ cannot satisfy a request according to the algorithm CACO,
then $A_S(C)+ \sum_k A_S(C_k)\ge \omega/7$, where $C_k$ represents
the neighboring cell of $C$.
\end{fact}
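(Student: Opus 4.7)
The plan is to directly exploit the rejection condition of CACO. If $C$ cannot satisfy an incoming request, then the \textbf{Else} branch of the algorithm is reached, meaning both $F_c$ (for the color $c$ of $C$) is exhausted within $C$, \emph{and}, crucially for this fact, every frequency in $F_S$ has already been assigned in $C$ or in at least one neighboring cell $C_k$. So the key observation I would record first is simply the cardinality $|F_S| = \omega - 6\omega/7 = \omega/7$.

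Next I would set up the counting. For each frequency $f \in F_S$, being ``used up in $C$ and its neighbors'' means $f$ was assigned to some accepted request located in the closed neighborhood $\{C\} \cup \{C_1,\ldots,C_6\}$. Such an assignment in $C$ contributes $1$ to $A_S(C)$, and an assignment in $C_k$ contributes $1$ to $A_S(C_k)$; if $f$ happens to be used in two non-adjacent neighbors $C_k, C_{k'}$ simultaneously, it contributes at least $2$ to the sum. In either case, each of the $\omega/7$ frequencies in $F_S$ contributes at least one unit to $A_S(C) + \sum_k A_S(C_k)$.

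Summing over all frequencies in $F_S$ yields $A_S(C) + \sum_k A_S(C_k) \ge |F_S| = \omega/7$, which is the desired inequality. There is no real obstacle here: the whole argument is a one-line bookkeeping consequence of the rejection condition in CACO and the definition of $A_S(\cdot)$. I would keep the proof short and just make the ``each $F_S$-frequency is counted at least once'' step explicit to avoid any ambiguity about double-counting.
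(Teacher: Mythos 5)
Your proposal is correct and matches the paper's reasoning exactly: the paper justifies this fact with the same one-line observation that rejection forces all $|F_S|=\omega/7$ frequencies of $F_S$ to be assigned somewhere in $C$ or its six neighbors, each contributing at least one unit to $A_S(C)+\sum_k A_S(C_k)$. Your explicit remark about each $F_S$-frequency being counted at least once (possibly more, in non-adjacent neighbors) is a harmless elaboration of the same argument, not a different route.
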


To compare the number of satisfied requests by \alg in each cell
with the optimal offline solution, we define $B_i$ as follows,
where $C_k$ represents the neighboring cell of $C_i$.
\begin{displaymath}
B_i = \left\{ \begin{array}{ll}
3O_i/7 & \textrm{if $C_i$ is safe}\\
A_i+\sum_k(A_k-3O_k/7)/3 & \textrm{if $C_i$ is dangerous.}\\
\end{array} \right.
\end{displaymath}

\begin{lem}
$\sum_i B_i\le \sum_i A_i$.
\end{lem}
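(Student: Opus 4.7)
The plan is to rearrange $\sum_i A_i - \sum_i B_i$ into contributions from safe cells only, and then show this rearranged sum is nonnegative by combining the two halves of Fact~\ref{fact:safe} with Lemma~\ref{lem:safe}.

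First I would write the difference $\sum_i A_i - \sum_i B_i$ by splitting the index set according to safe vs.\ dangerous cells. On the safe side, $A_i - B_i = A_i - 3O_i/7$, which is $\ge 0$ by Lemma~\ref{lem:safe}. On the dangerous side, $A_i - B_i = -\sum_k (A_k - 3O_k/7)/3$, where the sum ranges over the six neighbors of the dangerous cell $C_i$. So the entire difference becomes
\begin{equation*}
\sum_i A_i - \sum_i B_i \;=\; \sum_{C_i \text{ safe}} \bigl(A_i - 3O_i/7\bigr) \;-\; \sum_{C_i \text{ dangerous}} \frac{1}{3}\sum_{C_k \sim C_i} \bigl(A_k - 3O_k/7\bigr).
\end{equation*}

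Next I would re-index the double sum on the right by swapping the order of summation: instead of iterating over each dangerous cell and then its neighbors, I iterate over each cell $C_k$ and count how many dangerous cells it neighbors. By the second half of Fact~\ref{fact:safe}, every neighbor of a dangerous cell is itself safe, so only safe $C_k$ contribute. By the first half of Fact~\ref{fact:safe}, each safe $C_k$ has at most $3$ dangerous neighbors. Combined with $A_k - 3O_k/7 \ge 0$ (Lemma~\ref{lem:safe}, applied to the safe $C_k$), each safe cell contributes at most $3 \cdot (A_k - 3O_k/7)/3 = A_k - 3O_k/7$ to the double sum.

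Plugging this bound back in, the subtracted double sum is at most $\sum_{C_k \text{ safe}}(A_k - 3O_k/7)$, which exactly cancels the first sum, yielding $\sum_i A_i - \sum_i B_i \ge 0$. The only subtlety worth flagging is the sign control: the argument genuinely uses that $A_k - 3O_k/7$ is nonnegative for safe cells, because otherwise the inequality ``at most $3$ dangerous neighbors'' would go the wrong way. Once the nonnegativity from Lemma~\ref{lem:safe} is in hand, the two parts of Fact~\ref{fact:safe} fit together cleanly and the inequality falls out by a double-counting swap rather than any calculation.
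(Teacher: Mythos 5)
Your proof is correct and is essentially the paper's own argument: the paper charges each safe cell's surplus $A_k - 3O_k/7 \ge 0$ (Lemma~\ref{lem:safe}) in thirds to its at most three dangerous neighbors, using both halves of Fact~\ref{fact:safe} exactly as you do. Your version merely makes the charging explicit as a swap of the double summation, which is a slightly more rigorous write-up of the same idea rather than a different route.
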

\begin{proof}
Suppose $C_k$ is a safe cell.
According to Lemma~\ref{lem:safe}, we have $A_k\ge 3O_k/7$.
From Fact~\ref{fact:safe},
we know that there are at most three dangerous neighbors around
$C_k$, thus, after counting $B_k=3O_k/7$ frequencies in
$C_k$, the remaining $A_k-3O_k/7$ frequencies can compensate
the frequencies in its dangerous neighbors, and each
dangerous cell receives $(A_k-3O_k/7)/3$ frequencies.
From the definition of $B_i$, we can see that
$\sum_i B_i\le \sum_i A_i$.
\end{proof}


\begin{thm}
The asymptotic competitive ratio of algorithm \alg is at most 7/3.
\end{thm}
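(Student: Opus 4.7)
The plan is to combine the preceding lemma $\sum_i B_i \le \sum_i A_i$ with a complementary pointwise bound $B_i \ge 3 O_i / 7$ that I would prove for every cell $C_i$. Once both are in place, chaining gives
\[
\sum_i A_i \;\ge\; \sum_i B_i \;\ge\; \frac{3}{7}\sum_i O_i,
\]
which is exactly the desired asymptotic ratio $7/3$.

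For a safe $C_i$ the inequality $B_i \ge 3O_i/7$ holds by definition, so all the work is in the dangerous case. For a dangerous $C_i$, Fact~\ref{fact:safe} makes every neighbor $C_k$ safe, and Lemma~\ref{lem:safe} then forces each summand $(A_k - 3O_k/7)/3$ appearing in $B_i$ to be nonnegative. If $A_i \ge 3O_i/7$ the claim is immediate; otherwise $R_i \ge O_i > A_i$, so \alg must have rejected a request in $C_i$. The unnamed fact preceding the definition of $B_i$ then yields $A_S(C_i) + \sum_k A_S(C_k) \ge \omega/7$, and the rejection also forces $F_c$ to be exhausted in $C_i$, so $A_c(C_i) = 2\omega/7$ and hence $A_i = 2\omega/7 + A_S(C_i)$.

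The main obstacle is turning this $F_S$ budget into enough compensation to close the gap $3O_i/7 - A_i \le \omega/7$ after division by three. I would partition the neighbors into $N_1 = \{k : R_k > 2\omega/7\}$ and $N_2 = \{k : R_k \le 2\omega/7\}$. Neighbors in $N_2$ contribute nonnegatively since $A_k = R_k \ge O_k$ and $A_S(C_k) = 0$. For $k \in N_1$, \alg saturates $F_{x_k}$ in $C_k$, so $A_k \ge 2\omega/7 + A_S(C_k)$; combined with the neighbor-capacity bound $O_i + O_k \le \omega$ (distinct frequencies across adjacent cells in any valid assignment) this yields $A_k - 3O_k/7 \ge A_S(C_k) + (3O_i - \omega)/7$. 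Summing these, substituting into $B_i$, and using the rejection inequality to control $\sum_k A_S(C_k)$ in terms of $A_S(C_i)$, the required inequality $B_i \ge 3O_i/7$ reduces to
\[
(7 - |N_1|)\omega + 14 A_S(C_i) \;\ge\; (9 - 3|N_1|) O_i,
\]
which I would verify by a short case analysis on $|N_1| \in \{0, 1, \dots, 6\}$. Cases $|N_1|\ge 3$ are trivial from $O_i \le \omega$; cases $|N_1|\in\{1,2\}$ follow directly from $O_i\le\omega$ and $A_S(C_i)\ge 0$; the only delicate case is $|N_1| = 0$, where the absence of any neighbor using $F_S$ forces $A_S(C_i) \ge \omega/7$ through the rejection fact, and the inequality then reduces to $7\omega + 14(\omega/7) = 9\omega \ge 9 O_i$.

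With the pointwise bound in hand for every cell, a single summation over all cells combined with the preceding lemma concludes the theorem.
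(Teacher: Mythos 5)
Your proposal is correct and follows essentially the same route as the paper: the same amortized quantity $B_i$, the same pointwise target $B_i \ge 3O_i/7$ for every cell, and the same three ingredients in the dangerous case, namely the saturation identities $A_i = 2\omega/7 + A_S(C_i)$ and $A_k = 2\omega/7 + A_S(C_k)$, the adjacency bound $O_i + O_k \le \omega$, and the rejection fact $A_S(C_i) + \sum_k A_S(C_k) \ge \omega/7$. The only difference is bookkeeping: where the paper averages the optimum over the $m$ saturated neighbors via $\hat{O_i}$ and reduces the chain of inequalities to the $m=1$ case (establishing $B_i \ge 2\omega/7 + O_i/7$), you keep $|N_1|$ explicit and verify the equivalent reduced linear inequality $(7-|N_1|)\omega + 14A_S(C_i) \ge (9-3|N_1|)O_i$ by cases --- a reorganization of the same computation, not a different argument, and your case analysis checks out (including the $|N_1|=0$ case, which matches the paper's case where no neighbor exceeds $2\omega/7$).
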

\begin{proof}
From the definition of $O_i$ and $B_i$, we can say that
$O_i/B_i \le 7/3$ for any cell $C_i$
leads to the correctness of this theorem.
That is because
$$\frac{\sum_i O_i}{\sum_i A_i}\le\frac{\sum_i O_i}{\sum_i B_i}\le\max_i \frac{O_i}{B_i}.$$

If the cell $C_i$ is safe, i.e., $O_i \le 2\omega/3$, we have
$O_i/B_i = 7/3$.

If the cell $C_i$ is dangerous, i.e., $O_i > 2\omega/3$,
since $R_i \ge O_i > 2\omega/3 > 3\omega/7$,
that means the number of requests $R_i$ in this cell is
larger than $A_i$.
Thus, some requests are rejected, and
this cell cannot accept any further requests.

\begin{itemize}
\item If the number of accepted requests in any neighbor of $C_i$
           is no more than $2\omega/7$, that means all
           frequencies in $F_S$ are assigned to requests in cell
           $C_i$. Thus, $A_i = 3\omega/7$. In this case, we have
           $$O_i/B_i=O_i/(A_i+(\sum_k(A_k-3O_k/7))/3)\le O_i/A_i\le \omega/A_i=7/3.$$

\item Otherwise, suppose there are $m$ neighbors of $C_i$ in which the number
           of accepted requests are more than $2\omega/7$. Let $\hat{O_i}$ denote
           the average number of the optimum value of accepted requests in these $m$ neighboring cells around $C_i$.

\begin{eqnarray}
\lefteqn{ B_i =
A_i + (\sum_k(A_k - 3O_k/7))/3 }
\nonumber\\
& = & 2\omega/7 + A_S(C_i) + (\sum_k(A_k - 3O_k/7))/3 \nonumber\\
& \ge & 2\omega/7 + A_S(C_i) + (m\times 2\omega/7 +
\sum_{
 \substack{
\textrm{for the neighbors}\\\textrm{with $A_k> 2\omega/7$}}
}A_S(C_k) -
m\times 3\hat{O_i}/7)/3\nonumber\\
& \ge &  2\omega/7 + (m\times 2\omega/7 +
\sum_{\substack{
\textrm{for the neighbors}\\\textrm{with $A_k> 2\omega/7$}}
}A_S(C_k) + A_S(C_i) - m\times 3\hat{O_i}/7)/3\nonumber\\
& \ge &  2\omega/7 + (m\times 2\omega/7 + \omega/7 - m\times 3\hat{O_i}/7)/3\nonumber\\
& \ge &  2\omega/7 + (2\omega/7 + \omega/7 - 3\hat{O_i}/7)/3\nonumber\\
& &  (\textrm{that is because for any neighbor with } A_k > 2w/7,\nonumber\\
& & O_k \le (\omega - O_i) \le \omega/3, \textrm{thus},
\hat{O_i}\le \omega/3 \textrm{ and } 2\omega/7 - 3\hat{O_i}/7 \ge 0.)\nonumber\\
& \ge &  2\omega/7 + (3\omega/7 - 3(\omega - O_i)/7)/3\nonumber\\
&  &(\textrm{since }O_k \le \omega - O_i \textrm{, we have } \hat{O_i} \le \omega - O_i)\nonumber\\
& = & 2\omega/7 + O_i/7\nonumber
\end{eqnarray}
\end{itemize}
Thus, $O_i/B_i \le O_i/(2w/7 + O_i/7) \le 7/3$.

From the above analysis, we can say that the asymptotic competitive
ratio of the algorithm \alg is at most 7/3.
\end{proof}

In this kind of algorithms, the frequencies are partitioned
into $F_R$, $F_G$, $F_B$ and $F_S$, when a request arrives
at a cell with color $c$, first choose the frequency from the set
$F_c$, then from $F_S$ if no interference appear.
The performances are different w.r.t. the ratio between
$|F_R|$ ($|F_G|$, $|F_B|$) and $|F_S|$.
Note that from symmetry, the size of $F_R$, $F_G$ and $F_B$
should be same.
Now we show that \alg is best possible among such kind of algorithms.
Suppose the ratio between $|F_R|$ and $|F_S|$ is $x:y$.
Consider the configuration shown in Figure~\ref{fig:best}.
In the first step, $\omega$ requests arrive at the center cell
$C$ with color $c$, the algorithm will use up all frequencies in
$F_c$ and $F_S$, in this case, the ratio of accepted requests
by the optimal offline algorithm and the online algorithm
is $(3x+y)/(x+y)$ since the optimal algorithm will accept
all these requests. In the second step, $\omega$ requests
arrive at $C_1$, $C_2$ and $C_3$ with the same color $c'$.
The online algorithm can only accept $x\omega/(3x+y)$ requests
in each $C_i$ ($1\le i\le 3$) since the frequencies in $F_S$
are all used in $C$. In this case, the ratio between the
optimal offline algorithm and the online algorithm is
$3(3x+y)/(4x+y)$ since the optimal algorithm will accept all
$\omega$ requests in $C_i$ ($1\le i\le 3$) and reject all
requests in $C$. Balancing these two ratios, we have $x:y=2:1$.
From the description of the above two steps, the lower bound
of competitive ratio for this kind of algorithm is $7/3$.

\begin{figure}[htbp]
    \centering
    \includegraphics[width=1.1in]{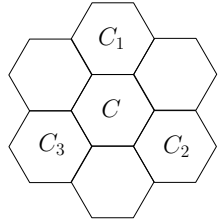}
    \caption{Algorithm \alg is best possible among this kind of algorithms}\label{fig:best}
\end{figure}

\section{Call Control in Triangle-Free Cellular Networks}

The call control problem in cellular network is hard. But for some
various graph classes, this problem may have a better performance.
For example, in linear network, an optimal online algorithm with
competitive ratio 3/2 can be achieved~\cite{YHZ09}.
An interesting induced network, \textit{triangle-free cellular
network}, has been studied for many problems including frequency
assignment problem\cite{Havet01,SZ04,ZCZ09}.
%

For a cell $C_i$ in triangle-free
cellular networks, there are only two
possible configurations for its neighboring
cells, which are shown in Fig.~\ref{fig:neighbor}.
If $C_i$ has 3 neighbors,
the neighboring vertices are of the same color. On the other hand,
if the neighbors are of different colors, $C_i$ has 2 neighbors.
There exists a simple structure in triangle-free cellular network, i.e.,
a cell has only one neighbor, this structure can be regarded as the
case in~Fig.~\ref{fig:neighbor-diff}.

\begin{figure}[htpb]
\centering \subfigure[Structure A: neighbors with the same base
color]{ \label{fig:neighbor-same}
\includegraphics[width=1in]{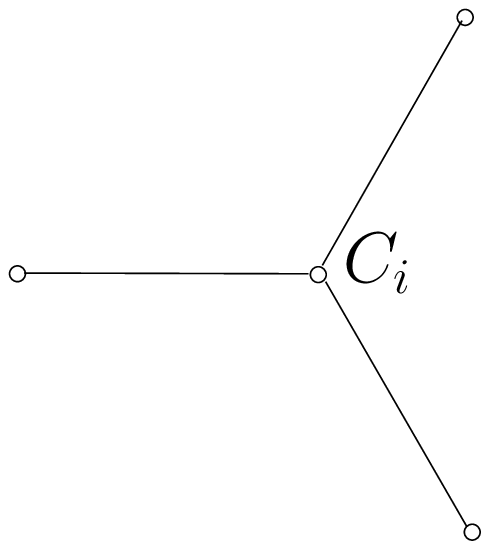}}
\hspace{1in} \subfigure[Structure B: neighbors with different base
colors]{\label{fig:neighbor-diff}
\includegraphics[width=1in]{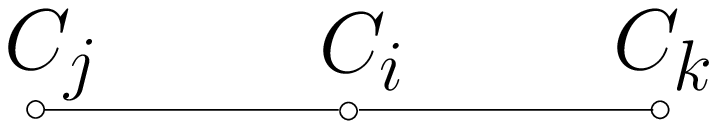}}
\caption{Structure of neighboring cells}
\label{fig:neighbor}
\end{figure}

For the three
base colors $R$, $G$ and $B$, we define a cyclic order
among them as $R \rightarrow G$, $G \rightarrow B$ and
$B \rightarrow R$. Partition the frequency set $\{1, ..., \omega\}$
into three disjoint sets:
$$F_R =\{1,..., \omega/3 \},\emph{  } F_G =\{\omega/3+1, ..., 2\omega/3\},
\emph{  }F_B =\{2\omega/3+1, ..., \omega\}$$


To be precisely, assigning frequencies from a set must in order
of \textit{bottom-to-top} (assigning frequencies from the lower number
to the higher number) or \textit{top-to-bottom} (assigning frequencies
from the higher number to the lower number).
Now we describe our algorithm for call control problem in triangle-free
cellular networks.\\

\noindent \textbf{Algorithm CACO2:} Handling arrival requests in
a cell $C$ with color $X\in\{R, G, B\}$
\begin{enumerate}
\item [1.]
If cell $C$ has no neighbors, just assign frequencies from 1 to $\omega$.

\item [2.]
If cell $C$ has neighboring structure $A$ (Fig.~\ref{fig:neighbor-same}),
let $Y$ be the base color of $C$'s neighbors and $Z$ be the other third color.
Assign frequency in cell $C$ as follows if no interference appear:
\begin{enumerate}
\item Assign frequencies from $F_X$ in bottom-to-top order.
\item If all frequencies in $F_X$ are used up, assign frequencies
from $F_Z$ in bottom-to-top order if $X \rightarrow Y$;
and in top-to-bottom order otherwise.

Such assignment guarantees that if $C$ uses the frequency from $F_Z$
after using up all frequencies from $F_X$, and its neighboring cell
$C'$ also uses the frequency from $F_Z$ after using up the frequencies
from $F_Y$, $C$ and $C'$ must assign frequency from $F_Z$ in different
order no matter what the neighbor configuration of $C'$ is.
(This can be verified by checking this case (case 2) and the next
case (case 3) of CACO2.)
\end{enumerate}

\item [3.]
If cell $C$ has neighboring configuration $B$ (Fig.~\ref{fig:neighbor-diff}),
let $Y$ and $Z$ be the base colors of its two neighbors, respectively.
Without loss of generality, assume that $X \rightarrow Y$.
Assign frequency in cell $C$ as follows if no interference appear:
\begin{enumerate}
\item Assign frequencies from $F_X$ in bottom-to-top order.
\item If all frequencies in $F_X$ are used up, assign frequencies
from $F_Y$ in top-to-bottom order.
\end{enumerate}
\end{enumerate}

\begin{thm}
The competitive ratio of \algo is at most 9/4.
\end{thm}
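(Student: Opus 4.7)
The plan is to adapt the safe/dangerous charging scheme from the cellular proof, retuned to the target ratio $4/9$. Call cell $C_i$ \emph{safe} if $O_i\le 3\omega/4$ and \emph{dangerous} otherwise. The triangle-free analogue of Lemma~\ref{lem:safe} asserts $A_i\ge (4/9)O_i$ for every safe cell, and it follows exactly as in Section~2: if $R_i\le \omega/3$ then $A_i=R_i\ge O_i$, while if $R_i>\omega/3$ then the first phase of \algo uses the whole of $F_X$, giving $A_i\ge \omega/3\ge (4/9)(3\omega/4)$. The bandwidth constraint $O_i+O_j\le \omega$ on adjacent cells combined with $O>3\omega/4$ on dangerous cells forces every neighbor of a dangerous cell to be safe (indeed, to satisfy $O\le \omega/4$); and since a cell in a triangle-free cellular network has at most three neighbors, a safe cell has at most three dangerous neighbors to compensate.

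I then introduce the analogous charging
\[
B_i \;=\; \begin{cases} \tfrac{4}{9}\,O_i & \text{if $C_i$ is safe,}\\ A_i + \sum_{k}\tfrac{1}{3}\bigl(A_k-\tfrac{4}{9}O_k\bigr) & \text{if $C_i$ is dangerous,} \end{cases}
\]
where the sum ranges over the (safe) neighbors of $C_i$. The same redistribution argument used in Section~2 — every safe cell spreads its non-negative surplus $A_k-(4/9)O_k$ evenly across its at most three dangerous neighbors — yields $\sum_i B_i\le \sum_i A_i$, reducing the theorem to the pointwise inequality $O_i/B_i\le 9/4$. For safe cells this holds with equality at the boundary, so the entire argument reduces to lower-bounding $B_i$ on dangerous cells.

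The main obstacle is this dangerous case, which is where the spillover design of \algo enters. When $C_i$ is dangerous, $O_i>3\omega/4$ exceeds the maximum possible $A_i\le 2\omega/3$, so \algo must have rejected requests at $C_i$: the primary set $F_X$ is exhausted and every frequency in $C_i$'s secondary set is occupied by some neighbor. The chirality rules embedded in \algo — a secondary assignment runs in the opposite direction from the primary owner of that set, and two adjacent cells spilling into a common third set do so in opposite orders — imply that the combined usage of $C_i$'s secondary set by $C_i$ and every participating neighbor totals at least $\omega/3$ frequencies, so any neighbor $C_k$ that blocks $C_i$ must itself carry either a large primary count or a large spillover count. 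I would split the analysis by the configuration of $C_i$ (Structure~A with three same-color neighbors or Structure~B with two differently-colored neighbors) and further by the individual structure of each neighbor, tracking which secondary sets coincide through the cyclic ordering $R\to G\to B\to R$; a neighbor whose secondary set differs from $C_i$'s does not help to block $C_i$ but then has full use of its own primary set and contributes at least $\omega/3$ to $A_k$. Combining these local lower bounds with $O_k\le \omega-O_i$ for every neighbor, the estimate collapses to $B_i\ge (4/9)\omega$, and since $O_i\le\omega$ this gives $O_i/B_i\le 9/4$, completing the proof.
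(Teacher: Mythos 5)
There is a genuine gap, and it sits at the foundation of your scheme: the claimed triangle-free analogue of Lemma~1 is false. In \alg the sets $F_R,F_G,F_B$ are truly private --- a proper $3$-coloring guarantees no neighbor of $C_i$ ever touches $C_i$'s own set, so $A_i=R_i$ whenever $R_i\le 2\omega/7$. In \algo this privacy is deliberately destroyed: the whole design is that a cell spills into a \emph{neighbor's} set ($F_Y$ in Structure~B, $F_Z$ in Structure~A), so a neighbor can occupy $C_i$'s primary set $F_X$ before $C_i$ receives any request. Concretely, let $C_i$ have color $R$ and a Structure-B neighbor $C_j$ of color $B$ (so $B\rightarrow R$ and $C_j$ spills into $F_R$). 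Send $2\omega/3$ requests to $C_j$ first: \algo accepts all of them, exhausting $F_B$ and all of $F_R$. Then send $\omega/3$ requests to $C_i$: every frequency of $F_R$ interferes with $C_j$, so $A_i=0$, while the offline optimum can take $O_i=\omega/3$, $O_j=2\omega/3$. Here $C_i$ is safe under your threshold $O_i\le 3\omega/4$, yet $A_i=0<\tfrac{4}{9}O_i$; both branches of your argument fail ($R_i\le\omega/3$ does not give $A_i=R_i$, and $R_i>\omega/3$ does not give $A_i\ge\omega/3$, since the ``first phase'' can be blocked externally). This breaks your redistribution lemma $\sum_i B_i\le\sum_i A_i$, whose proof needs every safe cell to carry a nonnegative surplus $A_k-\tfrac{4}{9}O_k$, and it also shows your safe/dangerous dichotomy (keyed to $O_i$) misses exactly the cells the theorem is hard for: cells with $A_i<\tfrac{4}{9}O_i$ can have $O_i$ as small as $\omega/3$.

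The paper's proof classifies cells by the \emph{algorithm-relative} condition $A_i\ge \tfrac{4}{9}O_i$ versus $A_i<\tfrac{4}{9}O_i$, not by the size of $O_i$, and routes compensation $H_{ji}$ from the specific neighbor $C_j$ that blocked $C_i$, using the chirality invariants you allude to but never pin down: when $C_j$'s spillover blocks $C_i$ inside a shared set, the opposite assignment orders force the exact identities $A_i+A_j=2\omega/3$ (Structure~B interactions) or $A_i+A_j=\omega$ (the Structure-A case where two cells spill into a common $F_Z$ from opposite ends), from which $A_j\ge\tfrac{4}{9}O_j$ follows and the surplus of $C_j$ provably covers $\tfrac{4}{9}O_i-A_i$. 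Your sketch of the dangerous case asserts that ``the estimate collapses to $B_i\ge\tfrac{4}{9}\omega$'' without these identities, and your claim that a neighbor with a different secondary set ``does not help to block $C_i$'' is also wrong --- such a neighbor can block $C_i$'s \emph{primary} set, as in the counterexample above. To repair the argument you would essentially have to abandon the $O_i$-threshold classification and redo the case analysis the paper does.
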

\begin{proof}
For a given request sequence, let $O_i$ and $A_i$ be the numbers of
accepted requests in cell $C_i$ by the optimal offline algorithm
and online algorithm CACO2, respectively. This theorem holds if
$\sum_i O_i/\sum_i A_i\le 9/4$. Similar to the analysis for CACO,
define $B_i$ as the amortized number of accepted requests in
cell $C_i$. Again, our target is to prove that
$\sum_i B_i \le \sum_i A_i$ and $O_i/B_i\le 9/4$ in any cell $C_i$.
W.l.o.g., let $X$, $Y$ and $Z$ denote the three
colors in the network.

Intuitively, we may set $B_i=4O_i/9$ if $A_i\ge 4O_i/9$,
and the remaining uncounted frequencies
can be used to compensate the number of accepted frequencies
in its neighboring cells. Next, we describe how to partition
the remaining uncounted frequencies according to cell $C_i$'s neighboring
configuration. Let $H_{ij}$ be the number of frequencies
used in $C_i$ and compensates the number of frequencies
in $C_j$.

\begin{enumerate}
\item The neighboring configuration of $C_i$ is
$A$ (Fig.~\ref{fig:neighbor-same}), the uncounted number
of frequencies is $A_i-4O_i/9$, evenly
distribute this number to its three neighboring
cells, i.e., each neighbors $C_j$ of $C_i$ receives
$H_{ij}=(A_i-4O_i/9)/3$.

\item The neighboring configuration of $C_i$ is
$B$ (Fig.~\ref{fig:neighbor-diff}). Assume that the color
of $C_i$ is $X$, the colors of its neighboring cells
are $Y$ (cell $C_j$) and $Z$ (cell $C_k$) respectively.
W.l.o.g., assume that $X\rightarrow Y$, $Y\rightarrow Z$ and
$Z\rightarrow X$.
\begin{itemize}
\item If $A_i> \omega/3$,

In this case,
the requests in cell $C_i$ will use some frequencies from
the top part of $F_Y$.
\begin{itemize}
\item If $A_j< 4O_j/9$,

there exist rejected request in $C_j$, thus,
$A_i+A_j=2\omega/3$.
The remaining uncounted number of frequencies in $C_i$ is
partitioned into $(4O_j/9-A_j)$ and $\omega/9$. The former part
$(4O_j/9-A_j)$ compensates the number in $C_j$
(i.e., $H_{ij}=4O_j/9-A_j$),
and the latter part
$\omega/9$ compensates the number in $C_k$
(i.e., $H_{ik}=\omega/9$) if $A_k<4O_k/9$.
This compensation is
justified since $4O_i/9 + (4O_j/9-A_j) + \omega/9=
4(O_i+O_j)/9 -A_j+\omega/9\le 5\omega/9 -A_j < A_i$.

\item If $A_j\ge 4O_j/9$,

in this case, no compensation is needed in $C_j$.
Let $H_{ik}=A_i-4O_i/9$ if $A_k<4O_k/9$.
\end{itemize}
\item If $A_i\le \omega/3$,

In this case, all frequencies used in
$C_i$ are from $F_X$, and some frequencies used in $C_k$ may from
$F_X$ too. If $A_k<4O_k/9$, all remaining
uncounted number $A_i-4O_i/9$
compensates the number in $C_k$, i.e.,
$H_{ik}=A_i-4O_i/9$.
No extra number of frequencies compensates the number
of frequencies in $C_j$, i.e., $H_{ij}=0$.


\end{itemize}
\end{enumerate}

We define $B_i$ as follows, where $H_{ji}$ is the compensation
from its neighbor $C_j$.
\begin{displaymath}
B_i = \left\{ \begin{array}{ll}
4O_i/9 & \textrm{if $A_i\ge 4O_i/9$}\\
A_i+\sum_j H_{ji} &  \textrm{if $A_i<4O_i/9$},  \\
\end{array} \right.
\end{displaymath}

From previous description,
we have $4O_i/9 + \sum_j H_{ij}\le A_i$ if $A_i\ge 4O_i/9$,
thus,
\begin{eqnarray}
\sum_i B_i \emph{ }
\lefteqn{
  =  \sum_{A_i\ge 4O_i/9}4O_i/9 + \sum_{A_i<4O_i/9}(A_i+
  \sum_{\textrm{$C_i$ and $C_j$ are neighbors}} H_{ji})}
\nonumber\\
& = & \sum_{A_i\ge 4O_i/9}(4O_i/9+
\sum_{\textrm{$C_i$ and $C_j$ are neighbors}}H_{ij})+ \sum_{A_i<4O_i/9}A_i\nonumber\\
& \le & \sum_{A_i\ge 4O_i/9}A_i+ \sum_{A_i<4O_i/9}A_i\nonumber\\
& = & \sum_iA_i\nonumber
\end{eqnarray}

Now we analyze the relationship between $B_i$ and $O_i$ for any cell $C_i$.
Assuming that the color of $C_i$ is $X$.
\begin{enumerate}
\item If $A_i\ge 4O_i/9$, $B_i=4O_i/9$.

\item If $A_i < 4O_i/9$,
\begin{enumerate}
\item If $A_i<\omega/3$

Since $A_i<4O_i/9$, there must exist some rejected requests in $C_i$.
Some frequencies in $F_X$ are used in one of $C_i$'s neighbor $C_j$.
According to the algorithm, the neighboring structure of $C_j$ is
$B$ (Fig.~\ref{fig:neighbor-diff}), and
$A_i+A_j=2\omega/3$. We say that $A_j\ge 4O_j/9$. Otherwise,
$A_i+A_j<4O_i/9+4O_j/9=4(O_i+O_j)/9\le 4\omega/9$, contradiction!

In this case, $H_{ji}=4O_i/9-A_i$, thus, $$B_i=A_i+
\sum_{\textrm{$C_k$ and $C_i$ are neighbors}} H_{ki}
\ge A_i+ H_{ji} = 4O_i/9.$$

\item If $A_i\ge \omega/3$ and $C_i$ has two neighbors
$C_j$ with color $Y$ and $C_k$ with color $Z$ as shown in
Fig.~\ref{fig:neighbor-diff}.

W.l.o.g., assume that
$X\rightarrow Y$, $Y\rightarrow Z$ and $Z\rightarrow X$.
According to the algorithm, after using up the frequencies
in $F_X$, $C_i$ will use some frequencies from $F_Y$ until
interference appear, thus, $A_i + A_j \ge 2\omega/3$.
We claim that $A_j> 4O_j/9$.
That is because $O_j\le \omega - O_i<\omega - 9A_i/4
\le\omega - 9\omega/12=\omega/4$, $A_i< 4O_i/9\le 4\omega/9$,
and $A_i + A_j \ge 2\omega/3$. Thus, we have $A_j\ge 2\omega/9
>4O_j/9$.

\begin{enumerate}
\item If the neighboring configuration around $C_j$ is $A$
(Fig.~\ref{fig:neighbor-same}),
$H_{ji}=(A_j-4O_j/9)/3$, and
\begin{eqnarray}
B_i & \ge & A_i + H_{ji}\nonumber\\
& = & A_i + (A_j-4O_j/9)/3\nonumber\\
& = & 2A_i/3+(A_i+A_j)/3-4O_j/27\nonumber\\
& \ge & 4\omega/9-4O_j/27\nonumber\\
& \ge & 4O_i/9\nonumber
\end{eqnarray}

\item
If the neighboring configuration around $C_j$ is $B$
(Fig.~\ref{fig:neighbor-diff}),

\begin{itemize}

\item
If $A_j\le\omega/3$, we have $H_{ji}=A_j-4O_j/9$. Thus,
$$B_i\ge A_i+H_{ji}=A_i+A_j-4O_j/9\ge 2\omega/3-4O_j/9
\ge 4O_i/9.$$

\item If $A_j> \omega/3$, according to the description of the compensation, $H_{ji}=\omega/9$ or $H_{ji}=A_j-4O_j/9$. In
    the former case,
    $$B_i\ge A_i+H_{ji}= A_i+\omega/9\ge\omega/3+\omega/9=4\omega/9
    \ge 4O_i/9.$$
    In the latter case,
    $$B_i\ge A_i+H_{ji}=A_i+A_j-4O_j/9\ge 2\omega/3-4O_j/9
    \ge 4O_i/9.$$
\end{itemize}
\end{enumerate}

\item If $A_i\ge \omega/3$ and the neighbors of $C_i$ are of the same
color (Fig.~\ref{fig:neighbor-same}),

Assume that the color of its
neighboring cell is $Y$.
According to the algorithm, after using up the frequencies from $F_X$,
$C_i$ will use some frequencies from $F_Z$ to satisfy some requests.
Since $C_i$ rejects some requests, we have either $A_i=2\omega/3$, or $A_i+A_j=\omega$ for some neighboring cell $C_j$ of $C_i$,
which is because $C_i$ and $C_j$
assign frequencies from $F_Z$ in different order, and $C_j$ will
use the frequency from $F_Z$ after using up the frequency from $F_Y$.
Since $A_i< 4O_i/9\le 4\omega/9$, the first case does not happen.

We claim that $A_j\ge 4O_j/9$, which is because
$A_j=\omega-A_i> \omega-4O_i/9\ge 5\omega/9>4O_j/9$.

\begin{itemize}
\item If the neighboring configuration of
$C_j$ is $A$ (Fig.~\ref{fig:neighbor-same}),
$H_{ji}=(A_j-4O_j/9)/3$. Thus,
\begin{eqnarray}
B_i & \ge & A_i+H_{ji}\nonumber\\
& = & A_i+(A_j-4O_j/9)/3 \nonumber\\
& = & 2A_i/3 + (A_i+A_j)/3 - 4O_j/27 \nonumber\\
& \ge & 5\omega/9 - 4O_j/27 \nonumber\\
& > & 4O_i/9\nonumber
\end{eqnarray}

\item If the neighboring configuration
of $C_j$ is $B$ (Fig.~\ref{fig:neighbor-diff}),
$H_{ji}=\omega/9$ or $A_j-4O_j/9$.
In the former case,
$$B_i\ge A_i+H_{ji}=A_i+\omega/9\ge 4\omega/9\ge 4O_i/9.$$
In the latter case,
$$B_i\ge A_i+H_{ji}=A_i+A_j-4O_j/9=\omega-4O_j/9\ge 4O_i/9.$$
\end{itemize}

\end{enumerate}
\end{enumerate}

Combine all above cases, we have $O_i/B_i\le 9/4$ in each cell $C_i$.
Since $\sum_i B_i\le \sum_i A_i$, we have
$\sum_i O_i/\sum_i A_i\le 9/4$.
\end{proof}

Next, we prove that the lower bound of the competitive
ratio for call control problem in triangle-free cellular networks
is at least 5/3.

\begin{thm}
The competitive ratio for call control problem in triangle-free
cellular network is at least 5/3.
\end{thm}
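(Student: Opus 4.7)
The plan is to exhibit a small triangle-free subnetwork and an adaptive adversarial request sequence that forces any deterministic online algorithm into a competitive ratio of at least $5/3$. The subnetwork I would use consists of a cell $C$ of some color $X$ together with three neighboring cells $C_1, C_2, C_3$, all sharing a single common color $Y$ (this is exactly neighboring structure $A$ from Fig.~\ref{fig:neighbor-same}). The crucial structural fact is that because $C_1, C_2, C_3$ are equicolored under the 3-coloring, they are pairwise non-adjacent, so there is no interference constraint among them.

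The adversary acts in two phases. In Phase 1 it issues $\omega$ requests at $C$; let $a$ denote the number accepted by the online algorithm. Since the algorithm is deterministic, the adversary observes $a$ and then branches on whether $a$ lies below or above the threshold $3\omega/5$. If $a < 3\omega/5$, the sequence terminates; the optimal offline algorithm accepts all $\omega$ requests at $C$, producing ratio $\omega/a > 5/3$. Otherwise, in Phase 2 the adversary issues $\omega$ additional requests at each of $C_1, C_2, C_3$. By the no-termination assumption, the $a$ frequencies already assigned in $C$ remain in use, so each neighbor $C_i$ can accept at most $\omega - a$ additional frequencies, bounding the online total by $a + 3(\omega - a) = 3\omega - 2a \le 9\omega/5$.

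To finish, I would exhibit an offline schedule achieving $3\omega$ accepted requests in Phase 2: reject every request at $C$ and use frequencies $1,\ldots,\omega$ in each of $C_1, C_2, C_3$. This is legal precisely because $C_1, C_2, C_3$ are pairwise non-adjacent, so the repeated use of each frequency does not create interference. Hence $O(\sigma)/A(\sigma) \ge 3\omega/(9\omega/5) = 5/3$ in this branch as well. Taking $\omega$ to be a multiple of $5$ and letting $\omega \to \infty$ upgrades the argument to the asymptotic setting, yielding $R_A^\infty \ge 5/3$ for any deterministic online algorithm $A$.

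The only genuinely delicate point is the verification that the adversary's subnetwork is realizable inside a triangle-free cellular network and that the optimum in Phase 2 truly reaches $3\omega$; both follow from the same observation, namely that the three neighbors of $C$ in structure $A$ all carry the same base color and therefore carry no mutual interference constraint. Once this is in place, the balancing calculation between the two ratios $\omega/a$ and $3\omega/(3\omega - 2a)$ pins the crossover at $a = 3\omega/5$ and both sides equal $5/3$.
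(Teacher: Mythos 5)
Your proposal is correct and is essentially identical to the paper's own proof: the same configuration of a center cell with three pairwise non-adjacent same-colored neighbors, the same two-phase adaptive adversary with threshold $3\omega/5$, and the same balancing of the ratios $\omega/a$ and $3\omega/(3\omega-2a)$. Your added remarks on realizability (via structure $A$ and the proper 3-coloring) and the asymptotic upgrade with $\omega \to \infty$ are sound but do not change the argument.
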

\begin{proof}
We prove the lower bound by using an adversary who sends requests according
to the assignment of the online algorithm.

\begin{figure}[htbp]
    \centering
    \includegraphics[width=1in]{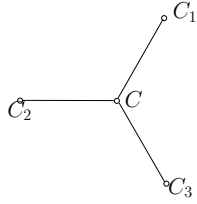}
    \caption{lower bound of competitive ratio is at least 5/3}\label{fig:lb}
\end{figure}

Consider the configuration shown in Figure~\ref{fig:lb}.

In the first step,
the adversary sends $\omega$ requests in the center cell $C$. Suppose the
online algorithm accepts $x$ requests. If $x\le 3\omega/5$, the adversary
stop sending request. In this case, the optimal offline algorithm can accept
all these $\omega$ requests, thus, the ratio is at least $5/3$.

If $x>3\omega/5$, the adversary then sends $\omega$ requests in each cell
of $C_1$, $C_2$ and $C_3$. To avoid interference, the online algorithm
accepts at most $\omega-x$ requests in each cell, and the total number of
accepted requests is $x+3(\omega-x)=3\omega-2x$. In this case, the optimal offline algorithm will accept $3\omega$ requests, i.e., reject all requests
in the center cell $C$. Thus, the ratio in this case is $3\omega/(3\omega-2x)$.
Since $x>3\omega/5$, this value is at least $5/3$.

Combine the above two cases, the competitive ratio for call control
problem in triangle-free cellular network is at least 5/3.
\end{proof}

\section{Concluding Remarks}

We have studied online call control problem in wireless communication
networks and presented
online algorithms in cellular networks and triangle-free cellular networks.
In cellular networks, we derived an upper bound of 7/3, while
in triangle-free cellular networks, the upper bound and lower bound
we achieved in this paper are 9/4 and 5/3, respectively.
These bounds surpass the previous best known results.
The gap between the upper and lower bound is quite big in both cases,
and closing the gaps are very interesting problems for future research.

\end{document}